\begin{document}

\newtheorem{theorem}{Theorem}[section]
\newtheorem{proposition}[theorem]{Proposition}

\newtheorem{definition}{Definition}[section]
\newtheorem{lem}{Lemma}[section]

\newcommand{\na}{\texttt{na}} \newtheorem{algo}{Algorithm}[section]

\bibliographystyle{apalike}

\Title{Approximate variances for tapered spectral estimates}

\Author{Michael Amrein and Hans R. K\"unsch}

\maketitle

\begin{abstract}\normalsize
We propose an approximation of the asymptotic 
variance that removes a certain discontinuity in the usual formula for the
raw and the smoothed periodogram in case a data taper is used. It is
based on an approximation of the covariance of the (tapered) periodogram
at two arbitrary frequencies. 
Exact computations of the variances for a Gaussian white noise and an AR(4)
process show that the approximation is more accurate than the usual formula.

\textbf{Key words}: Asymptotic variance, data taper, (smoothed)
periodogram.
\end{abstract}

\section{Introduction}

Spectral estimation is by now a standard topic in time
series analysis, 
and many excellent books are available, e.g. \cite{per} or \cite{blo}. The purpose
of this short note is to propose an approximation of the asymptotic 
variance that removes a certain discontinuity in the usual formula for the
raw and the smoothed periodogram in case a taper is used. The standard
asymptotic variance of the raw periodogram is independent of the 
taper chosen, see Formulae (222b) and (223c)
in \cite{per}. However, this changes when the raw periodogram
is smoothed over frequencies close by. Then a variance inflation
factor $C_h$, see (\ref{eq:C-h}), appears which is equal to one if no taper
is used and greater than one otherwise, compare Table 248 in \cite{per}. 
The reason for this is that tapering introduces correlations between
the raw periodogram at different Fourier frequencies. Because of this,
the variance reduction due to smoothing is smaller in the case of
no tapering. 
The above variance inflation factor is justified asymptotically when
the number of Fourier frequencies that are involved in the smoothing
tends to infinity (more slowly than the number of observations, otherwise
we would have a bias). Hence, if only little smoothing is used, then
we expect something in between: some increase in the variance, but less than
the asymptotic variance inflation factor $C_h$. We
give here a formula, see 
(\ref{eq:var-new}), which is almost as simple as the inflation factor, but
which takes the amount of smoothing into account. 

\section{Notation and preliminaries}
Let $\{X_t\}_{t \in \mathbb{Z}}$ be a real-valued stationary
process with observation frequency $1/\Delta$, mean
$\mathrm{E}[X_t]=\mu$, autocovariances $s_{\tau}:=\mathrm{Cov}(X_t,X_{t+\tau})$
and spectral density $S(f)$. We assume that $X_1, X_2, \ldots, X_N$
have been observed. 
The tapered periodogram (called direct spectral estimator in 
\cite{per}) is 
\begin{equation*}
 \hat{S}^{(tp)}(f) := \frac{\Delta}{\sum_{t=1}^Nh_t^2}\left|
\sum_{t=1}^N h_t(X_t-\tilde{\mu})e^{-i 2\pi i t f \Delta}\right|^2
\end{equation*}
for $f\in[0,1/(2\Delta)]$.
Here the estimator $\tilde{\mu}$ is usually either the arithmetic
mean $\bar{X}$ or the weighted average
$(\sum_{t=1}^N h_t X_t)/(\sum_{t=1}^N h_t).$
The latter has the property that $\hat{S}^{(tp)}(0)=0$. Since 
the choice is irrelevant for the asymptotics, we can use either version. 
The taper $(h_1, \ldots, h_N)$ is chosen to reduce the discontinuities
of the observation window at the edges $t=1$ and $t=N$. Usually,
it has the form
$h_t=h\left((2t-1)/(2N)\right)$
with a function $h$ that is independent of the sample size $N$.
A popular choice is the split cosine taper 
\begin{equation}
\label{split-cosine} 
h^p(x)=\left\{ \begin{array}{ll}
\frac{1}{2}(1-\cos(2 \pi x/p)) & 0 \leq x \leq \frac{p}{2}\\
1                             & \frac{p}{2} < x < 1-\frac{p}{2} \\
\frac{1}{2}(1-\cos(2 \pi(1-x)/p)) & 1-\frac{p}{2} \leq x \leq 1
\end{array} \right\} .
\end{equation}
The tapered periodogram has the approximate variance
\begin{equation}
  \label{eq:var-period}
  \mathrm{Var}[\hat{S}^{(tp)}(f)] \approx S(f)^2, \quad f \notin\{0, 1/(2 \Delta)\}
\end{equation}
(see e.g. \cite{per}, Formula (222b)). In particular, it
does not converge to zero. Because of this, one usually smoothes the
periodogram over a small band of neighboring frequencies. 
We smooth discretely over an equidistant
grid of frequencies. Let $ f_{N',k} = k/(N' \Delta) \ (0 \leq k \leq N'/2)$
for an integer $N'$ of order $O(N)$.
Then the tapered and smoothed spectral estimate is 
\begin{equation*}
\hat{S}^{(ts)}(f_{N',k}) = \sum_{j=-M}^M g_j 
\hat{S}^{(tp)}(f_{N',k-j}),
\end{equation*}
where the $g_j$'s are weights with the properties
$g_j > 0$, $\ g_j=g_{-j}$ $(-M \leq j \leq M)$ and $\sum_{j=-M}^M g_j=1$.
If $k\leq M$, the smoothing includes the value $\hat{S}^{(tp)}(0)$ which
is equal or very close to zero if the mean $\mu$ is estimated. In this
case, we should exclude $j=k$ from the sum. 

\section{Approximations of the variance of spectral estimators}
The usual approximation for the relative variance of
$\hat{S}^{(ts)}(f_{N',k})$ is 
\begin{equation}
  \label{eq:var-smooth}
  \mathrm{Var}\left(\frac{\hat{S}^{(ts)}(f_{N',k})}{S(f_{N',k})}\right) \approx
  C_h \frac{N'}{N}\sum_{r=-M}^M g_r^2 
\end{equation}
for $k \neq 0,N'/2$ where
\begin{equation}
\label{eq:C-h}
C_h = \frac{\sum_{t=1}^{N}h_t^4/N}
{(\sum_{t=1}^N h_t^2/N)^2} .
\end{equation}
This formula is given in \cite{blo}, equation (9.12) on p. 183, and it is
implemented in the 
function ``spec.pgram'' in the language for statistical computing R
(\cite{rr}). In order 
to see that it is the same as Formula (248a) in \cite{per}, one has to go
back to the definition of $W_m$ in terms of the weights $g_j$ which is
given by the formulae (237c), (238d) and (238e).
If we put $M=0$, (\ref{eq:var-smooth}) is different from (\ref{eq:var-period}).
The reason for this difference is that (\ref{eq:var-smooth}) is 
valid in the limit $M \rightarrow \infty$ and $M/N' \rightarrow 0$.
But in applications $M$ is often small, e.g. $M=1$, and one wonders
how good the approximation is in such a case. 

We propose here as alternative the following approximation 
for the relative variance
\begin{equation}
  \label{eq:var-new}
\left(\sum_{r=-M}^M g_r^2 + 2
\sum_{l=1}^{2M}
\frac{\left|H_2^{(N)}(f_{N',l})\right|^2}{H_2^{(N)}(0)^2} \sum_{r=-M}^{M-l}
g_r g_{r+l} \right) 
\end{equation}
(again for $k \neq 0, N'/2$) where
$H_2^{(N)}(f)=\frac{1}{N} \sum_{t=1}^N h_t^2 e^{-i 2 \pi t f \Delta}$.
In order to compute this expression, we need to compute the convolution of
the weights $(g_j)$ and the discrete Fourier transform of 
the squared taper. The former is usually not a problem since $M$ is
substantially smaller than $N'$. Using the fast Fourier transform,
exact computation of the latter is in most cases also possible. 
If not, then by the Lemma below we can use
\begin{equation*} 
H_2^{(N)}(f) \approx \int_0^1 h^2(u) e^{-i 2 \pi N u f \Delta} du \ e^{-i \pi f \Delta}
\frac{\pi f \Delta}{\sin(\pi f \Delta )}.
\end{equation*}
Choosing a simple form for the function $h$, we can compute the
integral on the right exactly. 
It is obvious that (\ref{eq:var-new}) agrees with (\ref{eq:var-period}) 
for $M=0$. In the next section, we show that it also agrees  
with (\ref{eq:var-smooth}) for $M$ large. 

\section{Justification of the approximation}
The idea is simple: We just plug in a suitable approximation for
the relative covariances of the tapered periodogram values into the 
exact expression for the relative variance. 
$\mathrm{Var}\left(\hat{S}^{(ts)}(f_{N',k})/S(f_{N',k})\right)$
is equal to 
\begin{equation}\label{exact1}
\sum_{r=-M}^M \sum_{s=-M}^M g_r g_s 
\mathrm{Cov}\left(\frac{\hat{S}^{(tp)}(f_{N',k-r})}{S(f_{N',k-r})},
\frac{\hat{S}^{(tp)}(f_{N',k-s})}{S(f_{N',k-s})}\right).
\end{equation}
The asymptotic behavior of these covariances is well known.
Theorem 5.2.8 of \cite{bri} shows that, under suitable conditions, we have
for frequencies  
$0 < f \leq g < 1/(2 \Delta)$ that
\begin{equation}
\label{cov-per}
\mathrm{Cov}\left(\frac{\hat{S}^{(tp)}(f)}{S(f)},\frac{\hat{S}^{(tp)}(g)}{S(g)}\right)=\frac{
\left|H_2^{(N)}(f-g)\right|^2+\left|H_2^{(N)}(f+g)\right|^2}
{\left|H_2^{(N)}(0)\right|^{2}}
+O(N^{-1}).
\end{equation}
The statement in \cite{bri} is actually asymmetric in $f$ and $g$ since it
has $S(f)$ instead of $S(g)$ on the left side in the equation above. Our
statement can 
be proved by the same argument if we assume $S(f)\approx S(g)$ when $|f-g|$ is
small. When $|f-g|$ is big, i.e. not of order $O(N^{-1})$, the covariance
is of the order  $O(N^{-1})$ anyhow.
Using the approximation (\ref{cov-per}) directly would
lead to an approximation which  
depends on $k$. Having to compute $N'/2$ different approximate variances
is usually too complicated. However, the term 
$|H_2^{(N)}(f+g)|^2$ is small unless $\Delta(f+g)$ is close to zero modulo 
one. This has been pointed out by \cite{thom}, see also the discussion
on p. 230--231 of \cite{per}. If we omit this term, then
we obtain our new approximation (\ref{eq:var-new}) by a simple
change in the summation indices.

We next give a simple lemma that justifies the omission of the second
term in (\ref{cov-per}). In addition, it also shows how the usual
approximation (\ref{eq:var-smooth}) follows from (\ref{eq:var-new}).

\begin{lem} If $\psi$ is once continuously differentiable on $[0,1]$ and
  $\psi'$ is Lipschitz continuous with constant $L$,
then 
\begin{equation*}
\frac{1}{N}\sum_{t=1}^N \psi \left(\frac{2t-1}{2N}\right)  e^{-i 2 \pi \lambda t}
= \int_0^1 \psi(u) e^{-i 2 \pi N \lambda u} du \ e^{-i \pi \lambda}
\frac{\pi \lambda}{\sin(\pi \lambda)} + R
\end{equation*}
where $ |R| \leq  \mathrm{const.}/N$ uniformly for all $\lambda \in [0,0.5]$.
\end{lem}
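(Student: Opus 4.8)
The plan is to evaluate the sum on the left by relating it to the integral on the right through a midpoint-type quadrature argument, keeping careful track of the oscillatory factor $e^{-i2\pi\lambda t}$. First I would substitute $h_t = \psi((2t-1)/(2N))$ and split the frequency factor as $e^{-i2\pi\lambda t} = e^{-i\pi\lambda}\,e^{-i2\pi\lambda(2t-1)/2} = e^{-i\pi\lambda}\,e^{-i2\pi N\lambda\,(2t-1)/(2N)}$. With the change of variable $u_t = (2t-1)/(2N)$, the sample points are exactly the midpoints of the $N$ subintervals $[(t-1)/N,\,t/N]$, each of length $1/N$, so
\begin{equation*}
\frac{1}{N}\sum_{t=1}^N \psi(u_t)\,e^{-i2\pi N\lambda u_t}
=\sum_{t=1}^N \int_{(t-1)/N}^{t/N}\psi(u_t)\,e^{-i2\pi N\lambda u_t}\,du,
\end{equation*}
and the task is to compare this with $\int_0^1 \psi(u)\,e^{-i2\pi N\lambda u}\,du$. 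The key point is that the midpoint rule is exact for the \emph{linear} part of the integrand, which is precisely what produces the correction factor $\frac{\pi\lambda}{\sin(\pi\lambda)}\,e^{-i\pi\lambda}$ rather than a naive factor of $1$: on each subinterval one should not approximate $\psi(u)e^{-i2\pi N\lambda u}$ by a constant, but rather recognize that integrating the true exponential $e^{-i2\pi N\lambda u}$ over $[(t-1)/N,t/N]$ against the midpoint value of $\psi$ already captures the oscillation exactly, and $\int_{(t-1)/N}^{t/N} e^{-i2\pi N\lambda u}\,du = \frac{1}{N}e^{-i2\pi N\lambda u_t}\cdot\frac{\sin(\pi\lambda)}{\pi\lambda}$.

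So the cleaner route is: on the $t$-th subinterval write $\psi(u) = \psi(u_t) + (u-u_t)\psi'(u_t) + \rho_t(u)$ where, by the Lipschitz hypothesis on $\psi'$, $|\rho_t(u)| \le \tfrac12 L (u-u_t)^2 \le \tfrac{L}{8N^2}$. Integrating term by term against $e^{-i2\pi N\lambda u}$ over each subinterval and summing, the zeroth-order term $\psi(u_t)$ reproduces (after factoring $e^{-i\pi\lambda}\frac{\pi\lambda}{\sin(\pi\lambda)}$) exactly $\frac{1}{N}\sum_t \psi(u_t)e^{-i2\pi N\lambda u_t}$ scaled by the reciprocal factor, i.e. it matches the left-hand side; the first-order term $(u-u_t)\psi'(u_t)$ integrates against the exponential to something of size $O(\psi'(u_t)/N^2)$ per interval — this needs a short computation, $\int_{-1/(2N)}^{1/(2N)} v\,e^{-i2\pi N\lambda v}\,dv$, which is purely imaginary and bounded by $\mathrm{const}/N^2$ uniformly in $\lambda$ — so summing over the $N$ intervals gives $O(1/N)$ using boundedness of $\psi'$; and the remainder term contributes at most $N\cdot\frac{L}{8N^2} = \frac{L}{8N}$. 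Assembling, $R$ is the difference between $\int_0^1\psi(u)e^{-i2\pi N\lambda u}du$ and the reconstituted sum, all bounded by $\mathrm{const}/N$ uniformly on $[0,0.5]$.

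**The main obstacle.** The delicate point is the uniformity in $\lambda$ near $\lambda = 0$, where the factor $\frac{\pi\lambda}{\sin(\pi\lambda)}$ must not blow up (it doesn't — it tends to $1$) but where the naive bound on $\int_{(t-1)/N}^{t/N} e^{-i2\pi N\lambda u}\,du$ via $\frac{\sin(\pi\lambda)}{\pi\lambda}$ could look singular if handled carelessly; one has to observe that $|\sin(\pi\lambda)/(\pi\lambda)|$ is bounded away from $0$ on $[0,0.5]$ (it ranges between $2/\pi$ and $1$), so dividing by it is harmless, and there is no resonance issue because $N\lambda$ can be large without the per-interval estimates degrading — the per-interval exponential integral is computed exactly, not estimated. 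A secondary bookkeeping point is to make sure the constant in $|R|\le\mathrm{const}/N$ depends only on $L$, $\sup|\psi'|$ (which is finite since $\psi'$ is continuous on $[0,1]$), and absolute constants, not on $\lambda$ or $N$; this follows because every estimate above was made uniformly. Finally, to get the application $H_2^{(N)}(f) \approx \int_0^1 h^2(u)e^{-i2\pi Nuf\Delta}du\,e^{-i\pi f\Delta}\frac{\pi f\Delta}{\sin(\pi f\Delta)}$ one applies the lemma with $\psi = h^2$ (which inherits the smoothness of $h$, at least for the split cosine taper) and $\lambda = f\Delta \in [0,0.5]$.
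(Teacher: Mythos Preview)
Your approach is essentially the same as the paper's: Taylor-expand $\psi$ to first order about the midpoint $u_t=(2t-1)/(2N)$ on each subinterval, integrate each term against $e^{-i2\pi N\lambda u}$ exactly (the zeroth-order piece producing the factor $\sin(\pi\lambda)/(\pi\lambda)$, which is then inverted), and bound the first-order and quadratic-remainder contributions by $O(1/N)$ uniformly in $\lambda$. The paper records the two elementary integrals $\int_{-\epsilon}^{\epsilon} e^{-i2\pi N\lambda u}\,du = \sin(\pi\lambda)/(\pi\lambda N)$ and $\int_{-\epsilon}^{\epsilon} u\,e^{-i2\pi N\lambda u}\,du = \frac{i}{2\pi\lambda N^2}\bigl(\cos(\pi\lambda)-\sin(\pi\lambda)/(\pi\lambda)\bigr)$ explicitly and then sums over $t$; your crude bound $\bigl|\int_{-1/(2N)}^{1/(2N)} v\,e^{-i2\pi N\lambda v}\,dv\bigr|\le 1/(4N^2)$ is equally sufficient. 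One minor slip: your remainder bound should be $\int_0^1|\rho(u)|\,du\le L/(8N^2)$, not $N\cdot L/(8N^2)=L/(8N)$ --- you appear to have dropped the interval length $1/N$ before summing --- but of course either is $O(1/N)$ and the argument goes through.
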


\begin{proof}
Put $\epsilon = 1/(2N)$. By a Taylor expansion, we obtain for any $x \in [0,1]$ 
\begin{equation*}
\begin{split}
\int_{x-\epsilon}^{x+\epsilon} \psi(u) e^{-i 2 \pi N \lambda u} du 
&= \psi(x) 
e^{-i 2 \pi N \lambda x }\int_{-\epsilon}^{\epsilon} 
e^{-i 2 \pi N \lambda u} du\\
&+ \psi'(x) e^{-i 2 \pi N \lambda x }\int_{-\epsilon}^{\epsilon} 
u e^{-i 2 \pi N \lambda u} du + R'
\end{split}
\end{equation*}
where the remainder satisfies $|R'| \leq 2\epsilon^3L/3  = L/(12 N^3).$
Next, observe that
$$ \int_{-\epsilon}^{\epsilon} e^{-i 2 \pi N \lambda u} du =
\frac{\sin(\pi \lambda)}{\pi \lambda N},\ 
\int_{-\epsilon}^{\epsilon} u e^{-i 2 \pi N \lambda u} du =
\frac{i}{2 \pi \lambda N^2}\left(\cos(\pi \lambda) -
\frac{\sin(\pi \lambda)}{\pi \lambda} \right).$$
From this the lemma follows by taking $x=(2t-1)/(2N)$ for $t=1,\ldots,N$
and summing up all terms.
\end{proof}

If $\psi(0)=\psi(1)=0$, then by partial integration
$$\left|\int_0^1 \psi(u) e^{-i 2 \pi N \lambda u} du \right|
\leq \frac{\sup |\psi'(x)|}{2\pi\lambda N}.$$ 
Hence by setting $\psi(u)=h^2(u)$, we obtain
\begin{equation}
\label{eq:H}
H_2^{(N)}(f) \leq \textrm{const.} (N f\Delta)^{-1} +
\textrm{const.} N^{-1} \leq \mathrm{const.} (N f\Delta)^{-1}
\end{equation}
for $f\leq 1/\Delta$.
Therefore the second term in (\ref{cov-per}) is negligible unless $f+g$ is
of the order $O(N^{-1})$. 

Finally, we derive the usual variance approximation (\ref{eq:var-smooth})
from (\ref{eq:var-new}) as follows. By Parseval's theorem
\begin{equation*}
\sum_{l=-N'/2}^{N'/2} \left|H_2^{(N)}( f_{N',l})\right|^2=
\frac{N'}{N} \frac{1}{N} \sum_{t=1}^N h_t^4.
\end{equation*}
Note that $H_2^{(N)}(0)=1/N\cdot\sum_{t=1}^N h_t^2$. Because of
(\ref{eq:H}), we have 
$$
\sum_{l=2M+1}^{N'/2} \left|H_2^{(N)}(f_{N',l})\right|^2\leq
\sum_{l=2M+1}^{N'/2} \left(\frac{\mathrm{const.}\cdot
    N'}{Nl\Delta}\right)^2 
\rightarrow 0 
$$
for $M\rightarrow \infty$ and $N'=O(N)\rightarrow \infty$. Thus 
$$\sum_{l=-2M}^{2M} \left|H_2^{(N)}( f_{N',l})\right|^2 -
\frac{N'}{N} \frac{1}{N} \sum_{t=1}^N h_t^4 \rightarrow 0,
$$
also in the above limit. If 
the weights $g_j$ change smoothly as a function $g$ of the lag $j$, i.e.,
$g_j=g(j/M)$, then for any fixed $l$
$$\sum_{r=-M}^{M-l} g_r g_{r+l} \sim M \int_{-1}^1g^2(u)du \sim
\sum_{r=-M}^{M} g_r^2 \ (M \rightarrow \infty)$$
and the desired result follows by dominated convergence.

\section{Comparison with exact relative variances for Gaussian processes} 
If we assume the process  $\{X_t\}_{t \in \mathbb{Z}}$ to be Gaussian, then
it holds
\begin{equation*}
\begin{split}
&\mathrm{Cov}\left(\frac{\hat{S}^{(tp)}(f)}{S(f)},\frac{\hat{S}^{(tp)}(g)}{S(g)}\right)= 
\frac{1}{S(f)S(g)(\sum_{t=1}^Nh_t^2)^2} \\ & \times
\left(\left|\sum_{j,k=1}^Nh_jh_ks_{j-k}e^{-i2\pi(fj-gk)\Delta}\right|^2+\left|\sum_{j,k=1}^Nh_jh_ks_{j-k}e^{-i2\pi(fj+gk)\Delta}\right|^2\right),
\end{split}
\end{equation*}
see p. 326 of \cite{per}. Plugging this into (\ref{exact1}) yields thus an
exact expression. Evaluation is of the order $O(N^3)$, thus it is not
practical to use it routinely. 

We now compare the two approximations to the
exact relative variances for a Gaussian white noise $X_t=\epsilon_t$ and
the AR(4) process  
\begin{equation}\label{ar4def}
X_t=2.7607X_{t-1}-3.8106X_{t-2}+2.6535X_{t-3}-0.9238X_{t-4}+\epsilon_t
\end{equation} used in
\cite{per} (see p. 46) where $\epsilon_t\ \mathrm{i.i.d} \sim
\mathcal{N}(0,1)$. True spectra are shown in Figure \ref{spectra_pic} in
decibel (dB), i.e., the plot displays $10
\log_{10}(S(.))$. 
\begin{figure}[h]
\centering
\includegraphics[scale=.5]{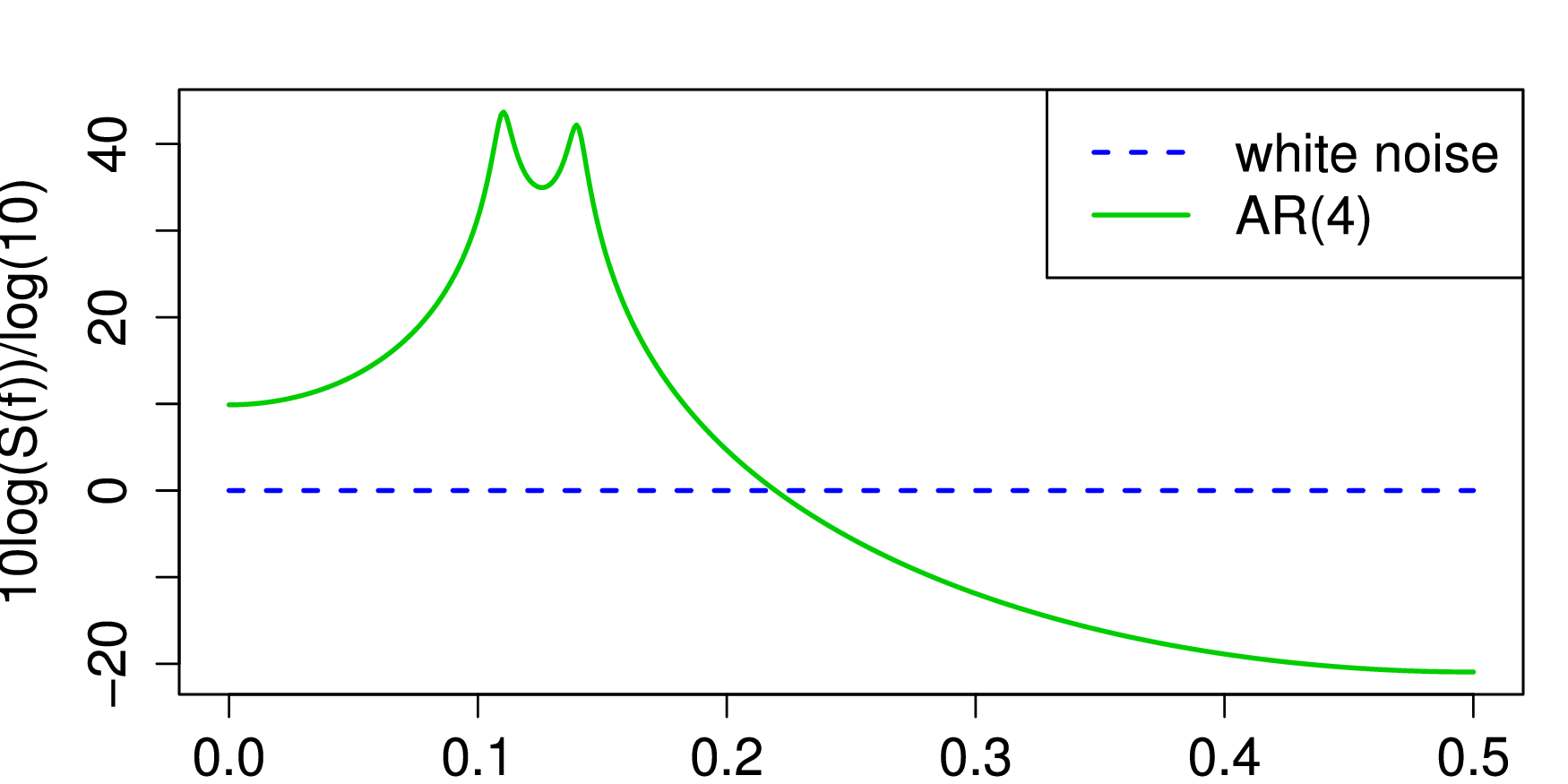}
\caption{Spectra of the Gaussian white noise and the AR(4) process in
  (\ref{ar4def}) in dB.}
 \label{spectra_pic}
\end{figure} 
As we can 
see, the spectrum of the AR(4) process varies over a wide range and exhibits
two sharp peaks. 
Further, we assume the observation frequency $1/\Delta$
to be $1$ and $N=2^{10}=1024$. We compute the exact relative
variance (\ref{exact1}) at the frequencies $f_{k,N'}$, $k=0,\dots,N'/2$, for
$N'\in\{N,2N\}$, 
the split cosine taper (\ref{split-cosine}) with $p\in\{0.2,0.5\}$ and
weights $g_j=1/(2M+1)$, $j=-M,\dots,M$, with $M\in\{0,1,2\}$. Comparison to
the usual approximation (\ref{eq:var-smooth}) and to the new one
(\ref{eq:var-new}) is shown in Figure \ref{ausw_pic}.
The code in R is available under
$\mathbf{http://stat.ethz.ch/\sim kuensch/papers/approximate\_variances.R}$.\\
We see that the new approximation fits the true relative variances clearly
better when we smooth over few frequencies, i.e., $M$ is small. Especially
in the situations when the data is strongly tapered ($p=0.5$) or when we use a
refined smoothing grid ($N'=2N$) we recommend to use the new approximation
(\ref{eq:var-new}). 
\begin{figure*}[h!]
\centering
\includegraphics[scale=.9]{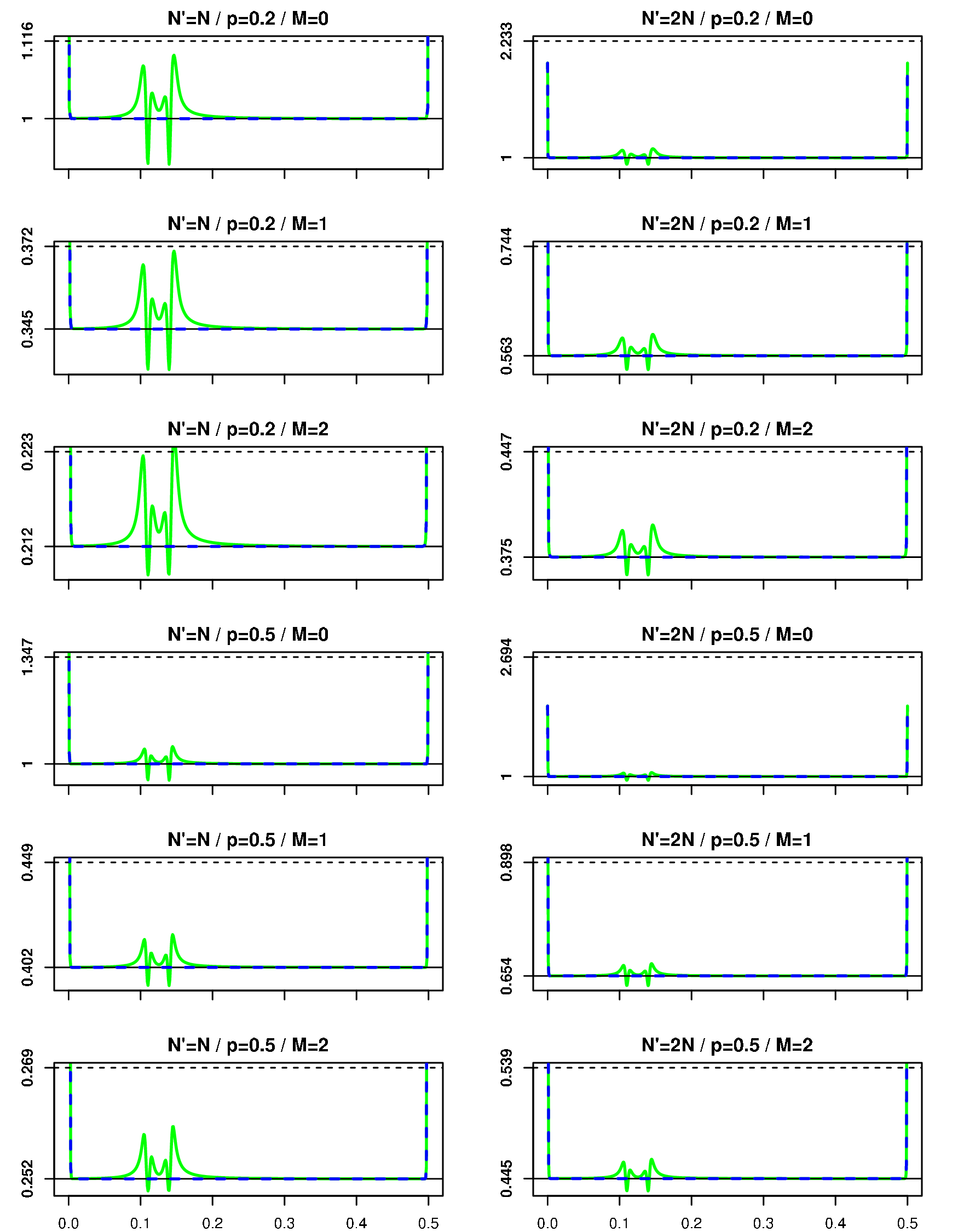}
\caption{Exact relative variances of the \textbf{Gaussian white noise
  (thick-dashed)} and the  
  \textbf{AR(4) process (thick-solid)} in comparison to the \textbf{usual
    (thin-dashed)} and 
  the \textbf{new (thin-solid)} approximation for different choices of
  $N'$, $p$ and $M$. }
 \label{ausw_pic}
\end{figure*} 

\section*{Acknowledgement} We thank Don Percival and Martin M\"achler for
helpful comments and suggestions on earlier versions.

\bibliography{appvar_II}
\end{document}